\documentclass[letterpaper, 10 pt, conference]{ieeeconf}  % Comment this line out if you need a4paper

\IEEEoverridecommandlockouts                              % This command is only needed if 
\usepackage{amsmath,amssymb,amsfonts}
\usepackage{soul}
\usepackage{todonotes}
\usepackage{comment}
\usepackage{mathtools}
\usepackage{caption}
\usepackage{subcaption}
\usepackage{hyperref}

\newtheorem{lem}{Lemma}
\newtheorem{prop}{Proposition}
\newtheorem{defin}{Definition}
\newtheorem{ass}{Assumption}
\newtheorem{thm}{Theorem}

\newenvironment{proposition}{\begin{prop}}{\hfill \mbox{\footnotesize$\square$} \end{prop}}
\newenvironment{definition}{\vskip 3pt\begin{defin}}{\vskip 3pt%% \hfill $\square$ 
\end{defin}}

\title{\LARGE \bf
Safe Formation Control of Open Multi-Robot Systems with Connectivity-Preserving Reconfiguration
}

\author{Pelin \c{S}ekercio\u{g}lu, Nicola De Carli, Dimos V. Dimarogonas % <-this % stops a space
\thanks{*This work was supported in part by the Wallenberg AI, Autonomous Systems and Software Program (WASP) funded by the Knut and Alice Wallenberg (KAW) Foundation, the ERC LEAFHOUND Project, the Swedish Research Council (VR), and Digital Futures.}% <-this % stops a space
\thanks{P. \c{S}ekercio\u{g}lu, N. De Carli, and D. V. Dimarogonas are with the Department of Decision and Control Systems, KTH Royal Institute of Technology, SE-100 44 Stockholm, Sweden (e-mail: \{pelinse, ndc, dimos\}@kth.se).}}%

\begin{document}

\maketitle
\thispagestyle{empty}
\pagestyle{empty}

%%%%%%%%%%%%%%%%%%%%%%%%%%%%%%%%%%%%%%%%%%%%%%%%%%%%%%%%%%%%%%%%%%%%%%%%%%%%%%%%
\begin{abstract}
We address the formation control problem for open multi-robot systems (OMRS), i.e., systems in which robots may join or leave the team during operation and new interaction links are established over time, subject to inter-robot collision-avoidance and connectivity-maintenance constraints. The robots are modeled by double integrators, and interact over a dynamic undirected graph.
We design a distributed controller based on the gradient of a barrier-Lyapunov function. % that drives the robots toward the desired formation while avoiding collisions and preserving established interactions. 
To enable team reconfiguration, we introduce a formation manager that coordinates robot additions and removals and establishes prospective edges whenever needed, either to connect a joining robot to the team or to preserve connectivity before a robot departs. The upper-distance constraints of prospective edges are temporarily relaxed through auxiliary dynamics that preserve feasibility while progressively recovering the nominal interaction range.
The resulting open-team dynamics are modeled as a switched system, for which we establish uniform practical stability for almost all initial conditions under a transition-dependent average dwell-time condition. Finally, the proposed approach is validated in realistic Gazebo simulations with dynamically simulated quadrotors undergoing repeated joining and departure maneuvers. \href{https://github.com/KTH-DHSG/constrained_omrs}{[Code]}\href{https://www.youtube.com/watch?v=Hw5_Zn0FGao}{[Video]}

\end{abstract}

%%%%%%%%%%%%%%%%%%%%%%%%%%%%%%%%%%%%%%%%%%%%%%%%%%%%%%%%%%%%%%%%%%%%%%%%%%%%%%%%
\section{INTRODUCTION}
Open multi-robot systems (OMRS) are networks in which robots may join or leave the system over time, causing interactions to be established or lost. Such systems arise naturally in robotic missions. For instance, in aerial inspection of large infrastructures, the limited battery endurance of quadrotors, typically well below the duration of the mission, imposes a rotation scheme in which depleted vehicles return to a charging station while freshly charged ones take their place; unexpected robot failures have the same effect. In all such cases, both the number of robots and the interaction topology vary during operation. Beyond achieving the desired formation, the control design must then also guarantee that inter-robot collisions are avoided and that the links required for coordination are preserved, despite the limited range of the onboard sensing and communication devices \cite{restrepo2022consensus, dimarogonas2007decentralized}. %In all such cases, both the number of robots and the interaction topology may vary during operation, the latter being itself determined by the inter-robot distances and the limited range of the onboard sensing and communication devices \cite{restrepo2022consensus, dimarogonas2007decentralized}.

%\textcolor{red}{Early results on open networks addressed average consensus via discrete-time gossip algorithms with random agent arrivals and departures \cite{hendrickx2017open}, and max-consensus under similar interactions \cite{abdelrahim2017max}. The proportional dynamic consensus problem was subsequently studied in \cite{franceschelli2018proportional} and \cite{franceschelli2020stability}, introducing novel stability notions for open networks. An alternative framework in \cite{xue2022stability} models open networks as switched systems whose models correspond to dynamics of different dimensions over different graph topologies, and establishes stability under a transition-dependent average-dwell-time condition. Building on this framework, \cite{restrepo2022consensus} addressed constrained consensus control for first-order systems, \cite{restrepo2024distributed} proposed a gradient-based controller preserving biconnectivity under agent removal, and \cite{CDC2025OMAS} developed a stability framework for synchronization over open signed networks with time-varying cooperative and antagonistic interactions.}

%================

Early works on open networks addressed average consensus via discrete-time gossip algorithms with random agent arrivals and departures \cite{hendrickx2017open}, max-consensus \cite{abdelrahim2017max}, and proportional dynamic consensus \cite{franceschelli2020stability}. A complementary framework was introduced in \cite{xue2022stability}, where open networks are modeled as switched systems with mode-dependent dimensions and graph topologies, and stability is established under a transition-dependent average-dwell-time condition. Building on this framework, \cite{restrepo2022consensus} addressed constrained consensus for first-order systems, \cite{restrepo2024distributed} proposed a gradient-based controller preserving biconnectivity under agent removal, and \cite{CDC2025OMAS} studied synchronization over open signed networks with time-varying cooperative and antagonistic interactions.

Most of these works assume unconstrained agent states and a prescribed interaction topology. In multi-robot systems, however, prescribed links must respect sensing and communication range limitations, while collision avoidance must also be enforced. Such requirements are naturally expressed as inter-robot constraints and have been addressed using artificial potential functions for single integrators \cite{dimarogonas_connectedness_2008,cheng2014decentralized}, double integrators \cite{5948318}, and unicycles \cite{panagou2013multi}. A widely adopted approach relies on control barrier functions (CBFs), which enforce forward invariance of a safe set through a differential condition on a function that is positive in the interior of the set and vanishes at its boundary \cite{de2024distributed,10556645}. Barrier-Lyapunov functions (BLFs) \cite{panagou2013multi} instead employ Lyapunov-like functions that remain finite in the interior of the constraint set and diverge as its boundary is approached. While CBF-based methods decouple safety from the nominal objective via online optimization, BLFs embed the barrier directly in the Lyapunov function, allowing constraint satisfaction and convergence to be certified jointly, which represents an important advantage in the switched, dimension-varying setting of an open network. 

%===========================

%Constrained consensus and formation problems have been studied for first-order systems \cite{shang2020resilient,chipade2020multi}, second-order systems \cite{verginis_connectivity_2017,grover2022noncooperative}, underactuated UAVs \cite{ER_TAC_drones-sat}, and for robot manipulators \cite{IEEE-robotmanip}; in all of these, however, the set of agents is fixed. \cite{restrepo2022consensus} solves the constrained consensus problem for first-order systems using BLFs.
%The extension to second-order dynamics is not straightforward: the barrier gradient no longer acts directly on the robots' positions, but enters the closed loop through the velocity-error dynamics. Establishing stability therefore requires a composite Lyapunov function combining the barrier with a quadratic term in the velocity errors, and leads to different sufficient conditions on the average dwell time.%, which must account for the jumps undergone by both the position and the velocity errors at each topology change.

%===============================
Constrained consensus and formation control have been studied for first-order systems \cite{shang2020resilient,chipade2020multi}, second-order systems \cite{verginis_connectivity_2017,grover2022noncooperative}, underactuated UAVs \cite{ER_TAC_drones-sat}, and robot manipulators \cite{IEEE-robotmanip}, but with a fixed set of agents. While \cite{restrepo2022consensus} considers constrained consensus for first-order open systems using BLFs, extending the approach to second-order dynamics is nontrivial because the barrier gradient acts through the velocity-error dynamics rather than directly on position. This motivates a composite Lyapunov construction combining the BLF with a quadratic velocity-error term and leads to different sufficient average-dwell-time conditions.

%==============================

In this paper, we study the formation-control problem for open multi-robot systems composed of robots modeled as double integrators and interconnected over an undirected graph that evolves as robots join or leave the team. We %reformulate the control objective in edge-based formation-error coordinates and 
design a distributed controller based on the gradient of a BLF, which drives the robots toward the desired formation while ensuring inter-robot constraints are satisfied. To enable team reconfiguration, we introduce a formation manager that coordinates robot additions and removals and assigns prospective interaction edges whenever new edges must be established. For each prospective edge, we introduce an auxiliary relaxation state whose dynamics temporarily relax the upper-distance constraint while preserving feasibility and drive the relaxation back to zero as the nominal interaction range is recovered, following a philosophy similar to~\cite{mehdifar2025robust,trakas2023robust, decarli2026underwaterformationcontrol}. Joining robots are connected to the existing team through prospective edges whose connectivity constraint is temporarily relaxed until the nominal interaction range is recovered. When the departure of a robot would disconnect the interaction graph, the formation manager instead follows a make-before-break strategy, establishing the required bridging edges before allowing the robot to leave. %In this way, both robot additions and removals can be performed while preserving connectivity of the active team.

%The resulting open-team dynamics are modeled as a switched system over varying interaction topologies and dimensions. Using a composite Lyapunov function combining the barrier and velocity-tracking terms, we establish uniform practical stability of the closed-loop system for almost all initial conditions under a transition-dependent average dwell-time condition, together with preservation of the inter-robot constraints. Finally, the proposed approach is validated in realistic Gazebo simulations with multiple UAVs repeatedly joining the team from, and returning to, a charging station.

The resulting open-team dynamics are modeled as a switched system with varying topology and dimension, for which uniform practical stability and preservation of the inter-robot constraints are established under a transition-dependent average dwell-time condition. The approach is validated in realistic Gazebo simulations with multiple UAVs repeatedly joining the team from, and returning to, a charging station.

%\nicksay{If we manage to shorten a bit the intro it is good.}

\textit{Notation:} $\lvert \cdot \rvert$ denotes the absolute value for scalars, the Euclidean norm for vectors, and the spectral norm for matrices. $\mbox{card}(\cdot)$ indicates the cardinality of a set. %$\mbox{diag}(z)$ denotes a diagonal matrix whose diagonal elements are the entries of the vector $z$. 
$\mathbb{R}$ is the set of real numbers and $\mathbb{R}_{\geq 0}$ the nonnegative orthant. $A>0$ ($A\ge 0$) indicates that $A$ is a positive definite (positive semidefinite) matrix.

\section{Model and Problem Formulation}
\label{sec:model_problem}

We consider an aerial multi-robot team whose composition may change during a long-duration mission as robots join or leave, e.g., to recharge or due to temporary unavailability. The resulting time-varying team is modeled as a switched system \cite{xue2022stability}.

Specifically, let
\(
\sigma:\mathbb{R}_{\geq 0}\rightarrow\mathcal{P}
\)
denote a switching signal describing the current composition of the
team, where $\phi=\sigma(t)\in\mathcal{P}$ denotes the active mode.
For any interval
\(
t\in[t_l,t_{l+1}),
\)
the composition of the team is constant and consists of $N_\phi$
robots, where $t_l$ and $t_{l+1}$ are consecutive switching instants. A switching instant corresponds to a change in the
team composition and/or established interaction topology, such as robot
admission, bridge activation, or robot departure. %A switching instant $t_l$ corresponds to a robot joining or leaving the team \cite{xue2022stability}.

At each mode $\phi\in\mathcal{P}$, the translational dynamics of robot $i \in \{1, 2, \dots, N_{\phi} \}$ are modeled as
\begin{subequations}
\label{SO}
\begin{align}
    \dot x_i &= v_i,
    \label{SO_a}\\
    \dot v_i &= u_i,
    \label{SO_b}
\end{align}
\end{subequations}
where
\(
x_i,v_i,u_i\in\mathbb{R}^3
\)
denote, respectively, $i$th robot's position, velocity, and commanded translational acceleration.

The double-integrator model in \eqref{SO} provides a suitable approximation of the translational motion of a multirotor vehicle when operating away from aggressive maneuvers and under a sufficiently faster low-level attitude and thrust controller~\cite{lee2010geometric}. In this work, we therefore focus on the coordination layer and assume that the commanded acceleration $u_i$ can be realized by the low-level controller.

\subsection{Interaction topology and inter-robot constraints}

The robots exchange information over an undirected interaction graph
\(
    \mathcal{G}_\phi
    =
    (\mathcal{V}_\phi,\mathcal{E}_\phi),
\)
where
\(
\mathcal{V}_\phi
=
\{1,\ldots,N_\phi\}
\)
is the set of robots participating in the mission and
\(
\mathcal{E}_\phi \subseteq \mathcal{V}_\phi^2
\)
is the set of $M_\phi$ active interaction links. We refer to the interaction links as \emph{edges}, denoted $\varepsilon_k$ with 
$k \leq M_\phi$, and to a pair of robots sharing an edge as \emph{neighbors}. Accordingly, $\mathcal{N}_{i,\phi} := \{ j \leq N_\phi : \varepsilon_k = (i,j) \in \mathcal{E}_\phi \}$ denotes the neighbor set of robot $i$.

The interaction topology is prescribed and remains constant between consecutive switching instants. A topology switch may correspond to the activation or removal of interaction edges, to a change in the team composition, or to both. The resulting graph is determined by the reconfiguration mechanism of Section~\ref{sec:open_team}.

Although the interaction topology is prescribed, the corresponding edges can only be physically maintained while neighboring robots remain sufficiently close. Moreover, all robots must preserve a minimum separation distance to avoid collisions. For every edge
\(
\varepsilon_k=(i,j)\in\mathcal{E}_\phi,
\)
define the relative displacement $\delta_k$ and relative distance $d_k$ between two robots as
\begin{equation}
    \delta_k
    :=
    x_i-x_j  
    \qquad 
    d_k \coloneqq \lvert\delta_k\rvert = \lvert x_i-x_j \rvert
    .
    \label{eq:relative_displacement}
\end{equation}
We assume, for simplicity, common lower and upper admissible
inter-robot distances
\(
d_{\min}>0
\)
and
\(
d_{\max}>d_{\min}.
\)
The admissible set associated with mode $\phi$ is then
\begin{equation}
\label{constraints}
    \mathcal{I}_{k,\phi}
    :=
    \left\{
        \delta_k\in\mathbb{R}^3
        :
        d_{\min}
        <
        d_k
        <
        d_{\max}
    \right\},
    \qquad
    \varepsilon_k\in\mathcal E_\phi .
\end{equation}

The lower bound $d_{\min}$ encodes the minimum safety distance to maintain for collision avoidance, whereas the upper bound $d_{\max}$ ensures that every prescribed edge remains within the communication/sensing range of the robots.

Let $\phi_0=\sigma(t_0)$ denote the initial mode. We assume that the
initial interaction graph is connected and that all its active edges
satisfy their corresponding inter-robot constraints.

\begin{ass}\label{ass1}
The initial graph $\mathcal G_{\phi_0}$ is connected and
\(
    \delta_k(t_0)\in\mathcal I_{k,\phi_0},
\) 
\(
    \forall\,\varepsilon_k\in\mathcal E_{\phi_0}.
\)
\end{ass}

By Assumption~\ref{ass1}, $\mathcal G_{\phi_0}$ contains a spanning tree \cite[Section~3.3]{mesbahi}. For subsequent topology changes, connectivity of the interaction graph and admissibility of newly activated edges are enforced by the reconfiguration mechanism introduced in Section~\ref{sec:open_team}.

\subsection{Problem formulation}\label{sec:prob_formulation}

For every mode $\phi$, let the desired formation be characterized by the desired relative displacement associated with each edge. In particular, for
\(
\varepsilon_k=(i,j)\in\mathcal{E}_\phi,
\)
let
\begin{equation}
    \delta_{d_{k,\phi}}
    :=
    x_{i,d,\phi}-x_{j,d,\phi}
    \label{eq:desired_relative_displacement}
\end{equation}
denote the desired relative displacement between robots $i$ and $j$, where $x_{i,d,\phi},x_{j,d,\phi}\in\mathbb{R}^3$ denote their desired positions. %, defined up to a common translation of the formation.
Since only relative positions are relevant, the desired formation is invariant to a common translation of all robot positions.

For each mode $\phi \in \mathcal{P}$, we define the formation error associated with edge
$\varepsilon_k$ as
\begin{equation}
\label{eq:edge_position_error}
    e_{x_{k,\phi}}
    :=
    \delta_k-\delta_{d_{k,\phi}}
    =
    (x_i-x_j)-(x_{i,d,\phi}-x_{j,d,\phi}).
\end{equation}

The objective is to design a distributed control law $u_i$ such that, for each active mode $\phi \in \mathcal{P}$ of the open multi-robot system,

\begin{enumerate}
    \item the prescribed formation is asymptotically achieved,
    \begin{equation}\label{obj1}
        e_{x_{k,{\sigma(t)}}}(t)\rightarrow0,
        \qquad
        %e_{v,k}(t)\rightarrow0,
        v_{i,{\sigma(t)}}(t)\rightarrow0,
        \qquad
        t\rightarrow\infty,
    \end{equation}
    for all active edges $k\leq M_\phi$ and all robots $i\leq N_\phi$;

    \item the inter-robot constraints remain satisfied,
    \begin{equation}\label{obj2}
        d_{\min}
        <
        d_k
        <
        d_{\max},
        \,
        \forall\,\varepsilon_k=(i,j)
        \in\mathcal{E}_{\sigma(t)},
    \end{equation}
    for all $t\geq0$.
\end{enumerate}

%Hence, the control objective combines formation stabilization, collision avoidance, and preservation of the prescribed edges, despite changes in the composition of the robot team.

%kfor all active edges;
%Sorry 

%\NDcol{
\section{Control Design}
\label{sec:control_design}

\subsection{Constrained formation control}\label{sec:constrained_control}

To address the formation-control problem under inter-robot collision avoidance and connectivity-maintenance constraints, we design a distributed controller based on backstepping and the gradient of a \textit{barrier Lyapunov function} (BLF) \cite{mesbahi,panagou2013multi}. We first recall the definition of a BLF.

\begin{definition}\label{def:2}
Consider the system $\dot{x}=f(x)$ and let $\mathcal{I}$ be an open set containing the origin. A BLF is a positive definite function $W : \mathcal{I}\to \mathbb{R}_{\geq 0},\ x \mapsto  W(x)$, that is $\mathcal{C}^1$, satisfies $\dot{W}(x) \leq 0$, and has the property that $W(x)\to \infty,\ \lvert \nabla W(x) \rvert \to \infty$ as $x \to \partial \mathcal{I}$.
\end{definition}

Using \eqref{eq:edge_position_error}, the constraint set \eqref{constraints} associated with edge $\varepsilon_k\in\mathcal E_\phi$ can be expressed in formation-error coordinates as
\begin{align}\label{294}
	\mathcal{I}_{k,\phi} = &\{ e_{x_{k,\phi}} \in \mathbb{R}^3 : \nonumber\\
    &d_{\min} < \lvert e_{x_{k,\phi}} + \delta_{d_{k,\phi}}  \rvert < d_{\max},\ k \le M_{\phi} \}.
\end{align}
For each $k\leq M_\phi$, we associate with $\varepsilon_k$ a BLF \footnote{An explicit instance of the BLF used in the numerical evaluation is provided in Section~\ref{sec:simulation}.} $\tilde W_{k,\phi}:\mathcal I_{k,\phi}\to\mathbb R_{\geq0}$, defined as
\begin{align}\label{eq:blf_edge}
&\tilde{W}_{k,\phi}(e_{x_{k,\phi}}) = \frac{1}{2}  \left[ \lvert e_{x_{k,\phi}} \rvert^2 + B_{k, \phi}(e_{x_{k,\phi}} + \delta_{d_{k,\phi}}) \right],
\end{align}
where $B_{k, \phi}(e_{x_{k,\phi}} + \delta_{d_{k,\phi}})$ is continuously differentiable on $\mathcal I_{k,\phi}$ and encodes the constraints defined in \eqref{obj2}. It is non-negative, satisfies $B_{k, \phi}(\delta_{d_{k,\phi}})=0$ and $B_{k, \phi}(e_{x_{k,\phi}} + \delta_{d_{k,\phi}}) \to \infty$ as either ${\lvert \delta_k \rvert \to d_{\min}}$ or ${\lvert \delta_k \rvert \to d_{\max}}$. Consequently $\delta_{d_{k,\phi}}$ is a global minimum of $B_{k,\phi}$, so that $\nabla B_{k,\phi}(\delta_{d_{k,\phi}})=0$.

The BLF in \eqref{eq:blf_edge} satisfies $\tilde{W}_{\phi,k}(0)=0$, $\nabla\tilde{W}_{\phi,k}(0)=0$, where $\nabla \tilde{W}_{\phi,k}= \left(\frac{\partial \tilde{W}_{\phi,k}}{\partial {e}_{x_{k,\phi}}}\right)^\top$ and $\tilde{W}_{\phi,k}({e}_{x_{k,\phi}})\to \infty$ as either ${\lvert \delta_k \rvert \to d_{\min}}$ or ${\lvert \delta_k \rvert \to d_{\max}}$. Moreover, since $B_{k,\phi}\geq0$, from \eqref{eq:blf_edge}, it follows that there exists $\kappa_1>0$ such that $ \frac{\kappa_1}{2} \left|e_{x_{k,\phi}}\right|^2 \leq \tilde W_{k,\phi}(e_{x_{k,\phi}}).$ An example of the BLF is provided in \eqref{eq:simulation_blf} in Section~\ref{sec:simulation} and depicted on Figure \ref{fig:combined_blf}. Under the choice of $B_{k,\phi}$, we assume that the associated BLF $\tilde W_{k,\phi}$ has exactly two critical points in $\mathcal I_{k,\phi}$: the desired equilibrium $e_{x_{k,\phi}}=0$ and a nondegenerate saddle point  $e_{x_{k,\phi}}^\star\neq0$ induced by the collision-avoidance constraint; see \cite[Appendix I]{ER_TAC_drones-sat}. Moreover, both critical points are isolated. These properties can be verified from the specific form of $B_{k,\phi}$; see \eqref{eq:simulation_blf}. Accordingly, we denote the set of critical points of $\tilde W_{k,\phi}$ by 
\(
    \mathcal W_{k,\phi}
    :=
    \left\{
        0,e_{x_{k,\phi}}^{\star}
    \right\}.
\)
%
%The origin $\{ {e}_{x_{k,\phi}} = 0\}$ is the unique minimum of $\tilde W_{k,\phi}$, while the collision-avoidance term introduces an additional isolated critical point, denoted by $e_{x_{k,\phi}}^{\star}$; see \cite{RESTREPO-THESIS,sekercioglu2024control}. 
% Accordingly, define
% \(
%     \mathcal W_{k,\phi}
%     :=
%     \left\{
%         0,e_{x_{k,\phi}}^{\star}
%     \right\},
% \)
% and
% \(
%     \left|e_{x_{k,\phi}}\right|_{\mathcal W_{k,\phi}}
%     :=
%     \min
%     \left\{
%         \lvert e_{x_{k,\phi}} \rvert,
%         \lvert e_{x_{k,\phi}}-e_{x_{k,\phi}}^{\star} \rvert
%     \right\}.
% \)
%There exists $\kappa_1>0$ such that $ \frac{\kappa_1}{2} \left|e_{x_{k,\phi}}\right|_{\mathcal W_{k,\phi}}^2 \leq \tilde W_{k,\phi}(e_{x_{k,\phi}}).$

% \begin{figure*}[t]
% \centering
% \begin{subfigure}[b]{0.45\linewidth}
%   \centering
%   \includegraphics[width=\linewidth]{Figures/BLF_3D_1.png}
%   \caption{Surface}
%   \label{fig:sub1}
% \end{subfigure}\hfill
% \begin{subfigure}[b]{0.45\linewidth}
%   \centering
%   \includegraphics[width=\linewidth]{Figures/BLF2D.pdf}
%   \caption{Vertical cut}
%   \label{fig:sub2}
% \end{subfigure}
% \caption{Weighted recentered BLF.}
% \label{fig:test}
% \end{figure*}

% \begin{figure}[t]
% \centering
% \subfloat[Surface]{%
%   \includegraphics[width=0.48\linewidth]{Figures/BLF_3D_1.png}%
%   \label{fig:sub1}}
% \hfill
% \subfloat[Vertical cut]{%
%   \includegraphics[width=0.48\linewidth]{Figures/BLF2D.pdf}%
%   \label{fig:sub2}}
% \caption{Weighted recentered BLF.}
% \label{fig:test}
% \end{figure}

\begin{figure}[htbp]
    \centering
    \begin{subfigure}[b]{0.48\linewidth}
        \centering
        \includegraphics[width=\linewidth]{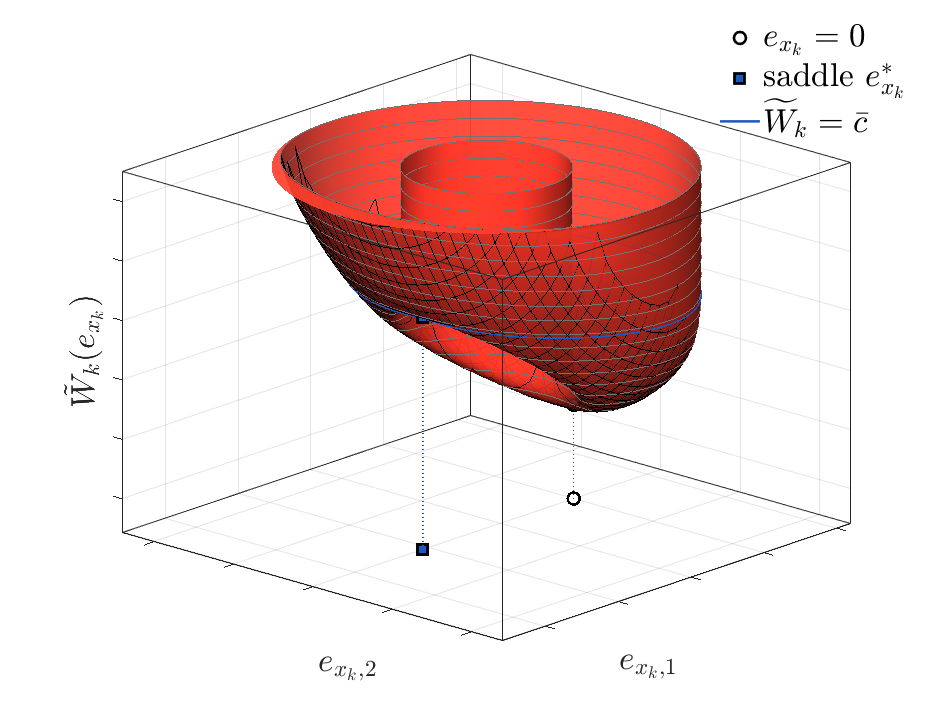}
        %\caption{BLF.}
        \label{fig1}
    \end{subfigure}
    \hfill
    \begin{subfigure}[b]{0.48\linewidth}
        \centering
        \includegraphics[width=\linewidth]{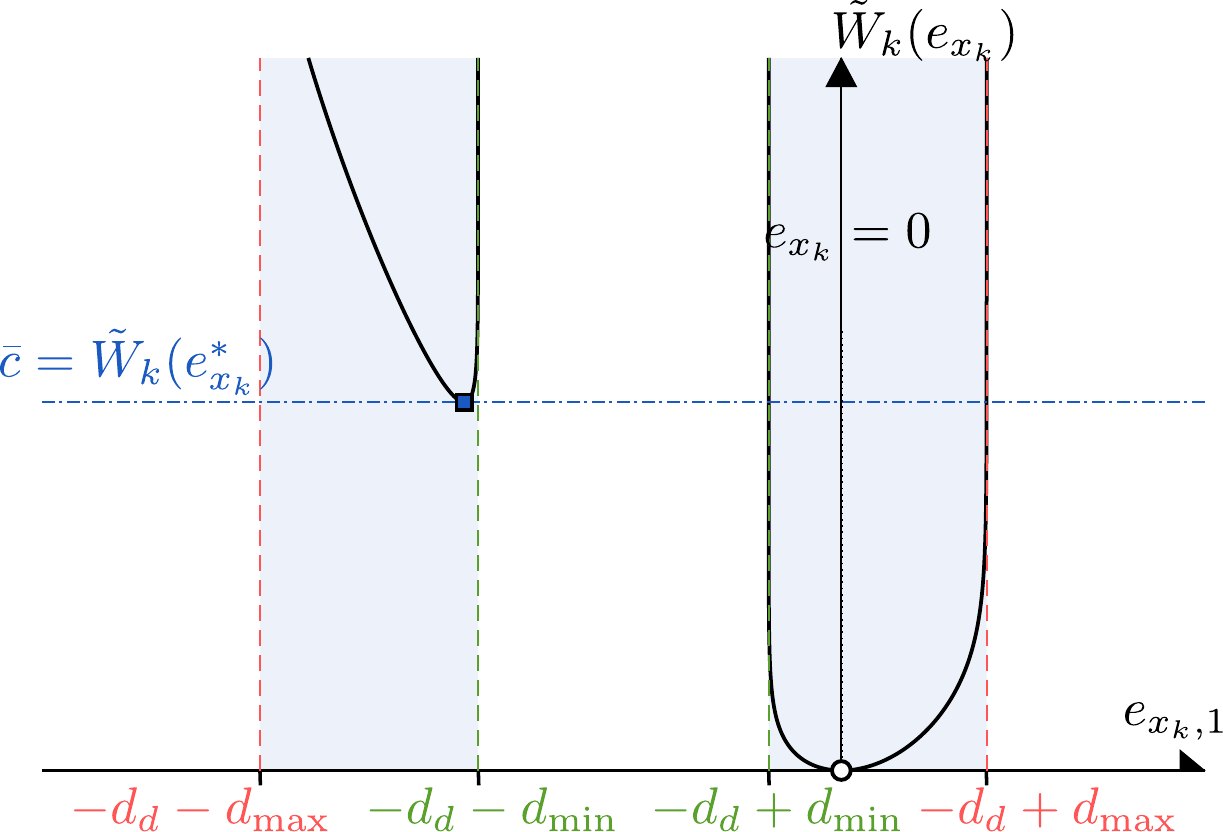}
        %\caption{Vertical cut of the BLF.}
        \label{fig2}
    \end{subfigure}
    \caption{Weighted recentered BLF $\tilde W_{k,\phi}$ in formation-error coordinates. Left: BLF surface, with the desired equilibrium $e_{x_{k,\phi}}=0$, the collision-induced saddle point $e_{x_{k,\phi}}^\star$, and the corresponding critical level $\bar c$. Right: one-dimensional cut illustrating the admissible regions and the divergence of the BLF at the distance constraints.}
    \label{fig:combined_blf}
\end{figure}

% =======

% In addition, we note that $\{ {e}_{x_{k,\phi}} = 0\}$ is a unique minimum of $\tilde{W}_{\phi,k}$ even though a second critical point exists induced by the collision avoidance constraint, which we denote by ${e}^*_{x_{k,\phi}}$ for any $k \leq M_\phi$, see \cite{RESTREPO-THESIS}, \cite[Appendix A.3.2]{sekercioglu2024control}. Define $\mathcal{W}_k := \{ 0, {e}^*_{x_{k,\phi}}\}$ and $\lvert e_{x_{k,\phi}} \rvert_{\mathcal{W}_k} = \min \{ \lvert e_{x_{k,\phi}}\rvert, \lvert e_{x_{k,\phi}} - e^*_{x_{k,\phi}} \rvert \}$ for any $k \leq M_\phi$, then
% \begin{align}\label{ass}
%     \frac{\kappa_1}{2}\lvert e_{x_{k,\phi}} \rvert_{\mathcal{W}_k}^2 \leq \tilde{W}_{\phi,k} \leq \kappa_2 \lvert \nabla \tilde{W}_{\phi,k}\rvert^2.
% \end{align}

% \nicksay{The right side of this cannot hold when you have collision avoidance.
% At the saddle point, you have $\lvert \nabla \tilde{W}_{\phi,k}\rvert=0$, but $\tilde{W}_{\phi,k}>0$.
% }

We follow a backstepping procedure by first treating the robot velocity $v_i$ as a virtual control input for the position subsystem \eqref{SO_a}. Specifically, we design a virtual velocity $v_{i,\phi}^*$ that drives the formation errors toward the desired formation. The actual control input $u_i$ is then designed to stabilize the coupled formation- and velocity-error dynamics through damping and passivating terms arising from the backstepping construction. 

The virtual control law is defined as
\begin{align}\label{eq:virtual_velocity}
    v_{i,\phi}^*(e_{x_{k,\phi}}) 
    = - k_1 \sum_{\substack{j\in\mathcal{N}_{i,\phi} \\ 
      \varepsilon_k = (i,j)}} a_{ij,\phi}\,\nabla_{ij} \tilde W_{\phi}(e_{x_{k, \phi}}),
    \quad i\leq N_\phi,
    %v_{\phi}^*(e_{x_\phi}) = - k_1 [E_{\phi} \otimes I_3] \nabla\bar W_{\phi}(e_{x_{k, \phi}}),
\end{align}
where $k_1>0$, $a_{ij,\phi}\in\{0,1\}$ indicates whether robots $i$ and $j$ are neighbors in mode $\phi$, that is, $a_{ij,\phi}=1$ if the edge $\varepsilon_k=(i,j) {\in \mathcal{E}_\phi}$ and $a_{ij,\phi}=0$ otherwise, and {$\nabla_{ij}\tilde W_{\phi} := 
\partial (\tilde W_{k,\phi}(e_{x_{k,\phi}})/\partial e_{x_{k,\phi}})^\top$ for $\varepsilon_k=(i,j)$ with} $\tilde W_{\phi}(e_{x_{k, \phi}})$ defined in \eqref{eq:blf_edge}. Notice that each term in \eqref{eq:virtual_velocity} depends only on the relative state associated with an edge of robot $i$. We then define the velocity-tracking error as 
\begin{equation}\label{v_tilde}
    \tilde v_{i,\phi} = v_{i,\phi} - v_{i,\phi}^*
\end{equation}
and choose the commanded acceleration as
\begin{align}\label{u}
    u_i = -&k_2 \sum_{\substack{j\in\mathcal{N}_{i,\phi} \\ 
      \varepsilon_k = (i,j)}} a_{ij,\phi}\bigl(\tilde v_{i,\phi} - \tilde v_{j,\phi}\bigr)
          - k_3 \tilde v_{i,\phi} \nonumber\\
          - &\sum_{\substack{j\in\mathcal{N}_{i,\phi} \\ 
      \varepsilon_k = (i,j)}} a_{ij,\phi}\,\nabla_{ij} \tilde W_{\phi}(e_{x_{k, \phi}})
          + \dot v_{i,\phi}^{*},
    %u = -k_2[L_\phi \otimes I_3] \tilde v - k_3 \tilde v - [E_{\phi} \otimes I_3] \nabla \bar W_{\phi} + \dot v^*,
\end{align}
where $k_2,k_3>0$.
%\nicksay{why no gain for the barrier gradient? And where does the first consensus like term comes from?}

\subsection{Open-Team Reconfiguration}
\label{sec:open_team}

We consider a \textit{formation manager} with knowledge of the current team composition, interaction topology, and robot states. The formation manager schedules robot additions and removals and assigns the edges required to preserve connectivity during each reconfiguration.

\subsubsection{\textbf{Prospective edges}}
Prior to a topology switch, the formation manager may assign a set $\mathcal E_\phi^{\mathrm p}$ of \textit{prospective edges}, i.e., edges that are intended to be added to the interaction graph. In particular, a prospective edge $\varepsilon_k=(i,j)$ may be assigned at time $t_a$ when the corresponding robots do not satisfy the nominal upper-distance constraint, i.e.,
\(
    d_k(t_a)
    =
    \lvert \delta_k(t_a)\rvert
    >
    d_{\max}.
\)
A prospective edge is treated as an ordinary interaction, except that its upper-distance constraint is temporarily relaxed from $d_{\max}$ to $d_{\max}+\rho_k$, where $\rho_k\geq0$ is an auxiliary relaxation state whose dynamics drive the relaxed bound back toward the nominal one. Until the prospective edge is physically established, the relative state required by its controller is provided either through multi-hop communication over the current interaction graph, when available, or by the formation manager.

For each prospective edge $\varepsilon_k=(i,j)\in \mathcal E_\phi^{\mathrm p}$, define the relaxed admissible set
\begin{align}\label{eq:relaxed_edge_set}
	\mathcal I_{k,\phi}^{\rho}
    := &\left\{
        e_{x_k,\phi}\in\mathbb R^3 : \right. \nonumber\\
        &\left.
    d_{\min}
        <
        \lvert
            e_{x_k,\phi}+\delta_{d_k,\phi}
        \rvert
        <
        d_{\max}+\rho_k \right\}.
\end{align}
For each mode $\phi \in \mathcal{P}$, the corresponding BLF is 
\begin{equation}\label{BLF}
    \bar W_{\phi}(e_{x,\phi},\rho_\phi)
    = \sum_{k\in\mathcal{E}_\phi}%\setminus\mathcal{E}^{\rm p}_\phi} 
        \tilde W_{\phi,k}(e_{x_k,\phi})
    + \sum_{k\in\mathcal{E}^{\rm p}_\phi} 
        \tilde W^{\rho}_{\phi,k}(e_{x_k,\phi},\rho_k),
\end{equation}
where $\tilde W^{\rho}_{\phi,k}$ is obtained from \eqref{eq:blf_edge} by replacing $d_{\max}$ with $d_{\max}+\rho_k$. %When  $\mathcal{E}^{\rm p}_\phi=\emptyset$, \eqref{BLF} reduces to the nominal potential. %, while the collision-avoidance boundary remains unchanged.

Let $t_a$ denote the instant at which $\varepsilon_k$ is assigned and let $\epsilon_\rho>0$ denote an initialization margin. We set
\begin{equation}
\label{eq:rho_initialization}
    \rho_k(t_a)
    =
    \max
    \left\{
        0,\,
        d_k(t_a)-d_{\max}+\epsilon_\rho
    \right\},
\end{equation}
and define the margin
\begin{equation}
\label{eq:relaxed_margin}
    s_k
    :=
    d_{\max}+\rho_k-d_k.
\end{equation}
Thus, $s_k>0$ if and only if the upper relaxed-distance constraint in \eqref{eq:relaxed_edge_set} is satisfied, and \eqref{eq:rho_initialization} guarantees $s_k(t_a)\geq\epsilon_\rho$.

The relaxation is progressively removed in order to recover the nominal interaction range. 

We consider the projected linear nominal contraction
\begin{equation}
\label{eq:rho_nominal_dynamics}
    \dot\rho_k
    =
    \left[-c_\rho\right]^+_{\rho_k},
    \qquad
    c_\rho>0,
\end{equation}
where, given $v\in \mathbb{R}$ and $a\in \mathbb{R}_\geq 0$, $[v]^+_a= v$ if $a>0$ and $[v]^+_a= \max\{0,v\}$ if $a=0$.
% \[
%     [v]^+_a
%     =
%     \begin{cases}
%         v, & a>0,\\
%         \max\{0,v\}, & a=0.
%     \end{cases}
% \]

%We consider the linear nominal contraction
%\begin{equation}
%\label{eq:rho_nominal_dynamics}
%    \dot\rho_k^{\rm nom}
%    =
%    -c_\rho,
%    \qquad
%    c_\rho>0,
%    \qquad
%    \rho_k>0.
%\end{equation}
However, the contraction rate of the relaxed boundary $d_{\max}+\rho_k$ must be compatible with the dynamics of the inter-robot distance $d_k$ so as to preserve $s_k>0$. To this end, we impose
\begin{equation}
\label{eq:relaxed_margin_condition}
    \dot s_k
    \geq
    -\lambda_\rho s_k,
    \qquad
    \lambda_\rho>0.
\end{equation}
Condition \eqref{eq:relaxed_margin_condition} limits the rate at which the margin $s_k$ can decrease. Specifically, for $s_k(t_a)>0$, it implies
\begin{equation}\label{eq:sol_sk}
    s_k(t)
    \geq
    s_k(t_a)e^{-\lambda_\rho(t-t_a)}
    >0,
\end{equation}
and therefore guarantees preservation of the relaxed upper-distance constraint.
Since, for $\varepsilon_k=(i,j)$,
\begin{equation}
    \dot s_k
    =
    \dot\rho_k-\dot d_k,
    \qquad
    \dot d_k
    =
    \frac{\delta_k^\top(v_i-v_j)}{d_k},
\end{equation}
condition \eqref{eq:relaxed_margin_condition} is equivalent to
\begin{equation}\label{98}
    \dot\rho_k
    \geq
    \dot d_k-\lambda_\rho s_k.
\end{equation}

Accordingly, we select the relaxation rate as the nominal linear
decay \eqref{eq:rho_nominal_dynamics} whenever it satisfies the above condition, and otherwise project
it onto the admissible boundary from \eqref{98}:
%\begin{equation}
%\label{eq:rho_dynamics}
%\dot\rho_k
%=
%\begin{cases}
%\displaystyle
%\max\left\{
%    -c_\rho,\,
%    \dot d_k-\lambda_\rho s_k
%\right\},
%& \rho_k>0,\\[2mm]
%0,
%& \rho_k=0,
%\end{cases}
%\end{equation}

\begin{equation}
\label{eq:rho_dynamics}
\dot\rho_k
=
\max\left\{
    \left[-c_\rho\right]^+_{\rho_k},\,
    \dot d_k-\lambda_\rho s_k
\right\},
\end{equation}

Hence, the nominal linear contraction is applied whenever compatible
with the relaxed constraint, while
\eqref{98} prevents the relaxed boundary from
contracting faster than allowed by the robot motion. %In particular, \eqref{eq:sol_sk}
%\eqref{eq:relaxed_margin_condition} guarantees
%\(
%    s_k(t)
%    \geq
%    s_k(t_a)e^{-\lambda_\rho(t-t_a)}
%    >0,
%\)
%and therefore preserves 
%ensures preservation of the relaxed upper-distance constraint.
Whenever the feasibility correction becomes inactive,
$\rho_k$ decreases linearly and reaches zero in finite time and it remains identically zero thereafter. The prospective edge can therefore be promoted to an established edge,
and the corresponding nominal BLF is well defined immediately after
the transition.

\begin{comment}
A prospective edge is promoted to an established edge only once the
relaxation has completely vanished and the robots lie strictly inside
the nominal interaction domain, i.e.,
\begin{equation}
\label{eq:prospective_edge_activation}
    \rho_k=0,
    \qquad
    d_{\min}+\varepsilon_d
    \leq
    d_k
    \leq
    d_{\max}-\varepsilon_d,
\end{equation}
for some $\varepsilon_d>0$. Thus, no nonzero relaxation is retained
after edge establishment, and the corresponding nominal BLF is
well defined immediately after the transition.
\end{comment}

\subsubsection{\textbf{Robot addition}}\label{par:node_addition}
{Let $\phi = \sigma(t) \in \mathcal{P}$ denote the active mode after the switching instant $t_l$, with $t \in [t_l, t_{l+1})$, and let $\hat \phi = \sigma(\hat t) \in \mathcal{P}$ denote the mode active immediately before the switching instant $t_l$, with $\hat t \in [t_{l-1}, t_l)$. Here, $t_{l-1}$, $t_l$, and $t_{l+1}$ are consecutive switching instants.}
Consider a robot $r$ requesting to join the team. The formation manager assigns its desired position and a nonempty set of prospective edges
\begin{equation}
    \mathcal E_r^{\mathrm p}
    \subseteq
    \left\{
        (r,j) : j\in\mathcal V_{\hat \phi}
    \right\},
    \qquad
    \mathcal E_r^{\mathrm p}\neq\emptyset.
\end{equation}
Since $\mathcal G_{\hat \phi}$ is connected, the establishment of any edge between $r$ and a robot in $\mathcal V_{\hat \phi}$ is sufficient to preserve connectivity after the addition of $r$. The prospective edges are established according to \eqref{eq:relaxed_edge_set}--\eqref{eq:rho_dynamics}.
Once they satisfy $\rho_k=0$, the joining event is performed as
\begin{align}
    \mathcal V_{\phi}
    =
    \mathcal V_{\hat \phi}\cup\{r\},
    \qquad
    \mathcal E_{\phi}
    =
    \mathcal E_{\hat \phi}\cup\mathcal E_r^{\mathrm p}.
\end{align}

\subsubsection{\textbf{Robot removal}}\label{par:node_removal}
Consider a robot $r\in\mathcal V_{\hat \phi}$ requesting to leave and define
\(
    \mathcal G_{\hat \phi}^{-r}
    :=
    \mathcal G_{\hat \phi}\setminus\{r\}.
\)
If $\mathcal G_{\hat \phi}^{-r}$ is connected, the robot can leave directly.
Otherwise, let
$\mathcal C_1,\ldots,\mathcal C_{n_c}$ denote the connected components
of $\mathcal G_\phi^{-r}$. The formation manager selects a set of
prospective bridging edges $\mathcal E_r^{\mathrm b}$ such that
\begin{equation}
\label{eq:departure_connectivity_condition}
    \left(
        \mathcal V_{\hat \phi}\setminus\{r\},
        \mathcal E_{\hat \phi}^{-r}\cup\mathcal E_r^{\mathrm b}
    \right)
\end{equation}
is connected, where $\mathcal E_{\hat \phi}^{-r}$ contains the edges of
$\mathcal E_{\hat \phi}$ that are not incident to $r$.

The addition of the bridging edges may render the current desired
formation incompatible with the post-departure topology. Accordingly,
the formation manager assigns a desired formation for the new mode
$\phi$, specified by desired positions
$x_{i,d,\phi}$, $i\in\mathcal V_{\hat \phi}\setminus\{r\}$.
The corresponding desired relative displacements are then defined as
\begin{equation}
    \delta_{d_{k,\phi}}
    :=
    x_{i,d,\phi}-x_{j,d,\phi},
    \qquad
    \varepsilon_k=(i,j)
    \in
    \mathcal E_{\hat \phi}^{-r}\cup\mathcal E_r^{\mathrm b}.
\end{equation}
In particular, the prospective bridging edges are established using
the desired relative displacements associated with the forthcoming
mode $\phi$.

Robot $r$ remains active while the edges in $\mathcal E_r^{\mathrm b}$ are established according to \eqref{eq:relaxed_edge_set}--\eqref{eq:rho_dynamics}. Once all required bridging edges satisfy $\rho_k=0$, the departure and edge activation are performed simultaneously,
\begin{align}
    \mathcal V_{\phi}
    &=
    \mathcal V_{\hat \phi}\setminus\{r\},\qquad \mathcal E_{\phi}
    =
    \mathcal E_{\hat \phi}^{-r}
    \cup
    \mathcal E_r^{\mathrm b}.
\end{align}
By construction, $\mathcal G_{\phi}$ is connected and the desired
relative displacements are consistent with the new interaction
topology.

\section{Stability analysis}
\subsection{Edge-based formulation}

To establish the main result, we recast the closed-loop dynamics in edge-based coordinates, where the constrained formation objective in \eqref{obj1}-\eqref{obj2} becomes the stabilization of the origin in formation error coordinates. To this end, we recall the edge-based representation induced by the incidence matrix \cite{zelazo2007agreement}. For each mode, let $E_\phi \in \mathbb{R}^{N_\phi \times M_\phi}$ denote the incidence matrix of $\mathcal{G}_\phi$ whose entries are defined as ${[E_{\phi}]_{ik}} := +1,$ if $\varepsilon_k = (j, i);$ ${[E_{\phi}]_{ik}} := -1$ if $\varepsilon_k = (i, j);$ and ${[E_{\phi}]_{ik}} := 0$, otherwise. The Laplacian and edge Laplacian matrices $L_{\phi} \in \mathbb{R}^{N_\phi \times N_\phi}$ and $L_{e_{\phi}} \in \mathbb{R}^{M_\phi \times M_\phi}$ of $\mathcal{G}_\phi$ can be expressed as
\begin{align}\label{Laplacians}
L_{\phi} = E_{\phi}E_{\phi}^\top,\quad L_{e_{\phi}} = E_{\phi}^\top E_{\phi}. 
\end{align}
The Laplacian matrix of an undirected and connected graph is symmetric positive semidefinite and has a unique zero eigenvalue, with all remaining eigenvalues being strictly positive. Then, the edge states can then be expressed in matrix form as 
\begin{equation}\label{err}
    e_{x_\phi} = [E_\phi^\top \otimes I_3] \bar x_\phi,
\end{equation}
where $e_{x_\phi}^\top = [e_{x_1}^\top\ \cdots e_{x_{M_\phi}}^\top]$ and $\bar{x}_\phi^\top = [\bar x_1^\top\ \cdots \bar x_{N_\phi}^\top]$ with $\bar x_i = x_i - x_{i,d,\phi}$. Under Assumption \ref{ass1}, since $\mathcal{G}_\phi$ contains a spanning tree, the edge states can be partitioned into those associated with the spanning tree and those associated with the remaining edges. Let $\mathcal{G}_{t_\phi}$ denote a spanning tree, and let $\mathcal{G}_{c_\phi} := \mathcal{G}_{\phi} \setminus \mathcal{G}_{t_\phi}$ denote the subgraph containing the remaining edges. Accordingly, we have $E_\phi := [E_{t_\phi} \ E_{c_\phi}]$, where $E_{t_\phi} \in \mathbb{R}^{N_\phi \times N_\phi -1}$ is the incidence matrix associated with the spanning-tree edges and $E_{c_\phi} \in \mathbb{R}^{N_\phi \times M_\phi - (N_\phi -1)}$ is the incidence matrix associated with the remaining edges. Since $E_{t_\phi}$ is the incidence matrix of a spanning tree, it has full column rank {\cite[Thm.2.1]{zelazo2007agreement}}. Hence, the corresponding edge-Laplacian matrix
$L_{e_{t,\phi}}:=E_{t_\phi}^\top E_{t_\phi}$
is positive definite. Therefore, $E_{t_\phi}^\top E_{t_\phi}$ is invertible, and there exists a matrix $R_\phi$ satisfying
\begin{align}\label{relation}
    E_\phi = E_{t_\phi} R_\phi,
\end{align}
where $R_\phi := [I_{N_\phi-1} \ T_\phi]$ and $T_\phi := (E_{t_\phi}^\top E_{t_\phi})^{-1}E_{t_\phi}^\top E_{c_\phi}$ \cite{zelazo2007agreement}. Consequently, we also have $e_\phi = [e_{x_{t_\phi}}^\top \ e_{x_{c_\phi}}^\top]^\top$, where $e_{x_{t_\phi}} \in \mathbb{R}^{(N_\phi-1) \times 3}$ are the states of the spanning-tree edges and $e_{x_{c_\phi}}\in \mathbb{R}^{(M_\phi - N_\phi+1) \times 3}$ are the states of the remaining edges. Then it follows from \eqref{relation} that
\begin{align}\label{752}
    e_{x_\phi} = [R_\phi^\top \otimes I_3] e_{x_{t_\phi}}.
\end{align}
Thus, the whole edge vector is determined by the spanning-tree edge states, and the formation control problem may be studied as a reduced-order stability problem in the coordinates $e_{x_{t_\phi}}$. We 
% define 
% \begin{align}\label{BLF}
%     \bar W_\phi (e_{x_\phi}) = \sum_{k=1}^{M_\phi}\tilde W_{\phi,k}(e_{x_{k, \phi}}),
% \end{align}
% and 
introduce the function $\bar{W}_{t,\phi}$ as ${\bar{W}_{t,\phi}(e_{x_{t_\phi}}) = \bar{W}_\phi(R^\top e_{x_{t_\phi}})}$, where $\bar W_\phi$ is defined in \eqref{BLF}.

By the chain rule and
\eqref{752},
\begin{align}
\label{641}
    \nabla \bar W_{t,\phi} = %\left(\frac{\partial \bar W_{t,\phi}} {\partial e_{x_{t_\phi}}}\right)^\top
    %=
    \left(\frac{\partial \bar W_\phi}
    {\partial e_{x_\phi}}
    \frac{\partial e_{x_\phi}}
    {\partial e_{x_{t_\phi}}}\right)^\top
    =
    (R_\phi\otimes I_3)
    \nabla_{e_{x_\phi}}\bar W_\phi.
\end{align}

 \begin{comment}
 \begin{align}
    \frac{\partial \bar W_\phi}{\partial e_{x_\phi}}
    &=
    \frac{\partial \bar W_\phi}{\partial e_{x_{t_\phi}}}
    \frac{\partial e_{x_{t_\phi}}}{\partial e_{x_\phi}}
    \nonumber\\
    &=
    \nabla \bar W_{t,\phi}^{\top} R^{\top}
    =
    \left(\nabla_{e_{x_\phi}}\bar W_\phi\right)^{\top}.
    \label{641}
\end{align}
\end{comment}
Differentiating on both sides of \eqref{err} and \eqref{v_tilde}, and using $\dot v = u$ with \eqref{u} and \eqref{641}, %and taking into account the open nature of the system, 
the closed-loop system for $t \in [t_l, t_{l+1} )$, in spanning-tree formation error coordinates, can be expressed as 
\begin{subequations}\label{152}
    \begin{align}
        \dot e_{x_{t_\phi}}(t) = -&k_1 [L_{e_t,\phi} \otimes I_3] \nabla \bar W_{t, \phi}(e_{x_{t_\phi}}(t)) \nonumber \\
        +&[E_{t,\phi}^\top \otimes I_3] \tilde v_{\phi}(t),\\
        \dot{\tilde v}_\phi(t) = -&k_2 [L_{\phi} \otimes I_3] \tilde v_\phi(t) - k_3 \tilde v_\phi(t) \nonumber\\
        - &[E_{t_\phi} \otimes I_3] \nabla\bar W_{t,\phi}(e_{x_{t_\phi}}(t)),
    \end{align}
\end{subequations}
where $L_\phi$ and $L_{e_\phi}$ are defined in \eqref{Laplacians}, and $L_{e_t,\phi} = E_{t_\phi}^\top E_{t_\phi}$ denotes the edge-Laplacian matrix associated with the spanning-tree graph. Moreover, taking into account the arrival and departure of robots described in Subsections \ref{par:node_addition}-\ref{par:node_removal}, the closed-loop system at the switching instants $t=t_l$ can be expressed as
\begin{subequations}\label{edge_switch}
    \begin{align}
        e_{x_{t_\phi}}(t_l^+) &= [\Xi^x_{\phi, \hat \phi} \otimes I_3] e_{x_{t_\phi}}(t_l^-) + \Phi^x_l \\
        \tilde v_{\phi}(t_l^+) &= [\Xi^{\tilde v}_{\phi, \hat \phi} \otimes I_3] \tilde v_{\phi}(t_l^-) + \Phi^{\tilde v}_l,
    \end{align}
\end{subequations}

The equation in \eqref{edge_switch} characterizes the edge-state transition of the switched system at each switching instant $t_l$, as described in \cite{xue2022stability}. 
%Here, $\phi = \sigma(t) \in \mathcal{P}$ denotes the active mode after switching, with $t \in [t_l, t_{l+1})$, while $\hat \phi = \sigma(\hat t) \in \mathcal{P}$ denotes the mode active immediately before the switching instant, where $\hat t \in [t_{l-1}, t_l)$. 
The matrices $\Xi^x_{\phi, \hat \phi},\Xi^{\tilde v}_{\phi, \hat \phi} \in \mathbb{B}^{M_{\phi} \times M_{\hat \phi}}$ have entries $\{ 0, 1\}$ and determine how the dimension of the error states vector transition between two modes. In particular, $e_{x_t,\hat \phi}(t_l^-)$ represents the error states vector just before the switching instant, while $e_{x_t,\phi}(t_l^+)$ represents the error states vector just after the switching instant. $\Phi^x_l,\Phi^{\tilde v}_l \in \mathbb{R}^{3M_{\phi}}$ are real-valued and bounded vectors collecting the discontinuities introduced by the switching. Their only non-zero components are  for each newly activated edge and for each robot whose neighbour set changes. Both are bounded: an edge is activated only when the corresponding pair is bounded away from both distance constraints, so that $\lvert\Phi^x_l\rvert\leq\bar\Phi_x$, and $\lvert v^{*}_{i,\phi}\rvert<\bar v_i$ gives $\lvert\Phi^{\tilde v}_l\rvert\leq\bar\Phi_{\tilde v},\ \bar \Phi_{\tilde v} >0$.
% that captures instantaneous changes resulting from events such as the addition and removal of a robot or the formation or removal of an edge.  

\subsection{Stability under topology switching}

First, we recall the concept of transition-dependent average dwell time for the switching signal $\sigma(t)$, which ensures that the switching signal meets the required conditions for stability. We first impose the following assumption on the set of switching modes, that implies an upper bound on $N_\phi$.
\begin{ass}\label{ass2}
    The total number of possible switching modes is finite, that is, $\mbox{card}(\mathcal{P}) < \infty$.
\end{ass}
\begin{definition}\label{def2} (\cite[Definition 2]{xue2022stability}) On a given interval $[t_0, t_f)$, with $t_f > t_0 \geq 0$, consider any two consecutive modes $\hat{\phi}, \phi \in \mathcal{P}$, where $\hat{\phi}$ precedes $\phi$. Let $N_{\hat{\phi},\phi}(t_0, t_f)$ denote the total number of switchings from mode $\hat{\phi}$ to mode $\phi$, and let $T_{\phi}$ denote the total active duration of mode $\phi$. The constant $\tau_{\hat{\phi},\phi} > 0$ satisfying $N_{\hat{\phi},\phi}(t_0, t_f) \leq \hat{N}_{\hat{\phi},\phi} + \frac{T_{\phi}(t_0, t_f)}{\tau_{\hat{\phi},\phi}},$ for any given scalar $\hat{N}_{\hat{\phi},\phi} \geq 0$, is called the \textit{transition-dependent average dwell time} of the switching signal $\sigma(t)$.
\end{definition}
Definition \ref{def2}, we are now ready to present our main stability result.
\begin{proposition}\label{prop1}
Consider the OMRS \eqref{SO}, under Assumptions \ref{ass1}-\ref{ass2}, in closed loop with the switching control law \eqref{u}, where $\bar W_{t,\phi}$ is defined in \eqref{BLF}. Let $\phi, \hat \phi \in \mathcal{P}$ be any two consecutive modes, where $\hat \phi$ precedes $\phi$. Define $\Omega_{\phi,\hat{\phi}}=2$, $\gamma_{\phi}=\min \{ \frac{k_1\lambda_{\min}(L_{e_{t,\phi}})}{\kappa_2},2k_3 \}$, with $\kappa_2>0$ a positive constant. Let $\bar c = \min_{\phi\in\mathcal{P}}\bar W_{t,\phi}(e_{t,\phi}^*)>0$, where $e^*_{t,\phi}$ denotes the saddle point of $\bar W_{t,\phi}$, and for $c\in(0,\bar c)$ define $\mathcal{S}_{\phi} := \{ e_{x_t,\phi} \in \mathcal{I}_\phi : \bar W_{t,\phi}(e_{x_{t_\phi}}) \leq c\}$. If the switching signal $\sigma$ admits a transition-dependent average dwell time satisfying
	\begin{equation}\label{cond}
		\tau_{\phi,\hat{\phi}} \geq \frac{\ln(\Omega_{\phi,\hat{\phi}})}{\gamma_{\phi}},
\end{equation} 
and if $c_0>0$ satisfies
\begin{equation}\label{eq:smallness}
    \Omega_{\phi,\hat\phi}\, c_0 + \Theta  \leq c,
\end{equation}
where %$\Omega_{\phi,\hat\phi}$ represents the factor by which the energy of the formation may grow and 
$\Theta$ represents the energy added to the system at a switching instant defined further in the proof, then
%For every mode $\phi \in \mathcal{P}$ the constraints set $\mathcal{I}_{\phi}$ defined in \eqref{294} is forward invariant, hence, collisions are avoided and all the initial and added edges are maintained. 
the origin of the closed-loop system \eqref{152}-\eqref{edge_switch} is uniformly practically stable for all initial conditions such that $(e_{x_{t_\phi}}(0), \tilde v_\phi (0)) \in \mathcal{S}_{\phi} \times \mathbb{R}^{3N_\phi}$. %, with a residual enlarged by $\varsigma/\gamma_\phi$ during the intervals on which prospective edges are pending, where $\varsigma \geq 0$. 
Moreover, for every mode $\phi \in \mathcal{P}$, the constraints set $\mathcal{I}_\phi$ defined in \eqref{294} is forward invariant.
%Moreover, collisions are avoided and all the initial and added edges are maintained for all $t \geq t_0$. %Moreover, under Assumption \ref{ass3}, the origin of the closed-loop system \eqref{152}-\eqref{edge_switch} is asymptotically stable for almost all initial conditions.
\end{proposition}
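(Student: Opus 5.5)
We will use a composite Lyapunov function per mode,
\[
V_\phi(e_{x_{t_\phi}},\tilde v_\phi) := \bar W_{t,\phi}(e_{x_{t_\phi}}) + \tfrac12 \lvert \tilde v_\phi\rvert^2 ,
\]
and follow the multiple-Lyapunov-function argument of \cite{xue2022stability}. The proof has three parts: a flow estimate inside each mode, a jump estimate at each switching instant, and a concatenation step that uses the transition-dependent average dwell time \eqref{cond}.

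\textbf{Flow estimate.} Differentiate $V_\phi$ along \eqref{152}. The cross terms $\nabla\bar W_{t,\phi}^\top[E_{t_\phi}^\top\otimes I_3]\tilde v_\phi$ and $-\tilde v_\phi^\top[E_{t_\phi}\otimes I_3]\nabla\bar W_{t,\phi}$ cancel exactly, which is the reason for the $-\nabla_{ij}\tilde W$ term in \eqref{u}. This leaves
\[
\dot V_\phi = -k_1\nabla\bar W_{t,\phi}^\top[L_{e_t,\phi}\otimes I_3]\nabla\bar W_{t,\phi} - k_2\tilde v_\phi^\top[L_\phi\otimes I_3]\tilde v_\phi - k_3\lvert\tilde v_\phi\rvert^2
\le -k_1\lambda_{\min}(L_{e_{t,\phi}})\lvert\nabla\bar W_{t,\phi}\rvert^2 - k_3\lvert\tilde v_\phi\rvert^2 .
\]
While prospective edges are pending, $\bar W$ also depends on $\rho_k$. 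The extra term $\partial_{\rho_k}\tilde W^\rho_{\phi,k}\,\dot\rho_k$ is then bounded using \eqref{eq:rho_dynamics} and \eqref{eq:sol_sk}: $s_k$ stays bounded away from zero on the relevant interval and $\dot\rho_k$ is bounded, so the term adds only a bounded perturbation to the residual. This perturbation will be absorbed into $\Theta$ and the ``practical'' part of the conclusion.

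\textbf{Main obstacle: converting $\lvert\nabla\bar W_{t,\phi}\rvert^2$ into $\bar W_{t,\phi}$.} A bound $\bar W_{t,\phi}\le\kappa_2\lvert\nabla\bar W_{t,\phi}\rvert^2$ fails globally, because of the saddle point $e^*_{t,\phi}$. It does hold on the sublevel set $\mathcal S_\phi$ with $c<\bar c$. On this compact set (compact because the barrier blows up at $\partial\mathcal I_\phi$), the only critical point of $\bar W_{t,\phi}$ is the origin. The origin is a nondegenerate minimum: the Hessian there is $R_\phi(I+\nabla^2 B)R_\phi^\top\otimes I_3>0$. A local quadratic comparison near $0$, combined with the positive minimum of $\lvert\nabla\bar W_{t,\phi}\rvert$ away from $0$ on $\mathcal S_\phi$, gives some $\kappa_2$, and by Assumption~\ref{ass2} it can be taken uniform over modes. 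We then get $\dot V_\phi\le-\gamma_\phi V_\phi$, with $\gamma_\phi$ as in the statement, for as long as the trajectory stays in $\mathcal S_\phi$. Since $V_\phi$ is nonincreasing, $\bar W_{t,\phi}\le V_\phi\le c$ along the flow. Forward invariance of $\mathcal I_\phi$ follows because $\bar W_{t,\phi}$ diverges at its boundary. The proof must also show that the bound is not lost at the jumps, which is the next part.

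\textbf{Jump estimate and concatenation.} At $t_l$, use \eqref{edge_switch}. Components are only kept or dropped, since $\Xi$ is a $0/1$ selection matrix, and the offsets $\Phi^x_l,\Phi^{\tilde v}_l$ are bounded. We aim for
\[
V_\phi(t_l^+)\le \Omega_{\phi,\hat\phi}V_{\hat\phi}(t_l^-)+\Theta,\qquad \Omega_{\phi,\hat\phi}=2 .
\]
The factor $2$ comes from $\lvert a+b\rvert^2\le2\lvert a\rvert^2+2\lvert b\rvert^2$. Dropped edges only decrease $\bar W$. Newly activated edges are activated with $\rho_k=0$ and bounded away from both distance limits, so their barrier contribution is bounded. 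Then $\Theta$ collects $2\bar\Phi_{\tilde v}^2$, the bounded barrier values of new edges, and the $\rho$-perturbation above. Iterating between switches gives
\[
V_{\sigma(t)}(t)\le \prod_{l}\Omega\,e^{-\sum\gamma T}V(t_0)+\text{(bounded sum of }\Theta\text{ terms)} .
\]
Using Definition~\ref{def2} and \eqref{cond}, the factor $\prod\Omega$ is dominated by $e^{\sum\gamma_\phi T_\phi}$ up to the constant $\Omega^{\hat N}$. This yields an exponentially decaying bound plus a uniform residual, i.e.\ uniform practical stability. Finally, \eqref{eq:smallness} is used inductively: if $V\le c_0$ before a switch, then $V\le c$ after it. This keeps $e_{x_{t_\phi}}\in\mathcal S_\phi$ in every mode, which validates the local $\kappa_2$ bound and gives forward invariance of $\mathcal I_\phi$ for all $t$. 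For that induction to close, one must check that $V$ has decayed back below $c_0$ before the next switch; the dwell-time condition is used for this, possibly after enlarging $\tau$ by a margin. We expect this compatibility check, together with the uniform $\kappa_2$ on $\mathcal S_\phi$, to be the delicate step.
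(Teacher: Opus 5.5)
Your argument follows the paper's proof essentially step for step. It uses the same composite function $V_\phi=\bar W_{t,\phi}+\tfrac12\lvert\tilde v_\phi\rvert^2$ and the same cancellation of the cross terms in $\dot V_\phi$. It obtains the same jump bound $V_\phi(t_l^+)\le 2V_{\hat\phi}(t_l^-)+\Theta$ from $\lvert\Xi^{\tilde v}_{\phi,\hat\phi}\rvert=1$ and bounded barrier values of newly activated edges. It absorbs the $\rho$-dependent term as a constant into the residual, uses the dwell-time concatenation of \cite{xue2022stability}, and gets forward invariance from blow-up of the barrier. The one substantive difference is that you derive the gradient-domination inequality \eqref{ass} on $\mathcal S_\phi$, whereas the paper simply posits it. Your derivation (compact sublevel set, nondegenerate minimum at the origin, no other critical point below $\bar c$, finitely many modes) is the right justification. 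It also makes explicit that \eqref{cond1} is only available while $\bar W_{t,\phi}\le c$.

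That restriction is exactly where a genuine gap remains, and your proposal leaves it open, as does the paper.

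\textbf{Base case.} The statement allows initial data in $\mathcal S_\phi\times\mathbb R^{3N_\phi}$, so $\tfrac12\lvert\tilde v_\phi(0)\rvert^2$, and hence $V_{\phi_0}(t_0)$, can be arbitrarily larger than $c$. Then ``$V_\phi$ nonincreasing gives $\bar W_{t,\phi}\le V_\phi\le c$'' is not available. Kinetic energy can be converted into $\bar W_{t,\phi}$, the trajectory can leave $\mathcal S_\phi$ (even converge to the saddle equilibrium), and \eqref{cond1} is lost.

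\textbf{Inductive step.} It cannot be closed by showing that $V$ decays below $c_0$ before each switch. An average dwell time with $\hat N_{\hat\phi,\phi}\ge1$ permits bursts of $m$ switches with arbitrarily short gaps. After such a burst your estimates give only $V\le 2^mc_0+(2^m-1)\Theta$, which \eqref{eq:smallness} does not control once $m\ge2$. Even a genuine minimum dwell time would have to satisfy $\tau\ge\ln(2+\Theta/c_0)/\gamma_\phi$, strictly more than \eqref{cond}. With equality in \eqref{cond} and periodic switching, the estimates give only $V(t_l^+)\le V(t_{l-1}^+)+\Theta$. That grows linearly, so no residual bound follows at all.

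The paper's proof has the same hole. It asserts $V_{\sigma(t)}(t_l^+)\le c$ ``from \eqref{cond2} together with \eqref{eq:smallness}'' without ever establishing $V(t_l^-)\le c_0$.

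What closes it is a bootstrap on the cumulative estimate rather than a one-step induction:
\begin{enumerate}
\item Restrict the initial data to $V_{\phi_0}(t_0)\le c_0$.
\item Take \eqref{cond} with a strict margin, so that the accumulated growth factors are dominated with some decay rate $\eta>0$.
\item Derive $V(t)\le C_1e^{-\eta(t-t_0)}V(t_0)+C_2\Theta+C_3\varsigma$, with constants depending on the chatter bounds $\hat N_{\hat\phi,\phi}$, on $\eta$ and on the dwell times.
\item Replace \eqref{eq:smallness} by the requirement that this right-hand side be at most $c$.
\end{enumerate}
As long as $V\le c$, \eqref{cond1} and \eqref{cond2} hold, so the bound holds and $V\le c$ persists. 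This yields both the practical stability and the forward invariance of $\mathcal I_\phi$.
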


\begin{proof}
For each mode $\phi \in \mathcal{P}$, we consider the Lyapunov function candidate
    \begin{align}\label{V}
        V_\phi = \bar W_{t,\phi} + \frac{1}{2}\lvert \tilde v_{\phi} \rvert^2,
    \end{align}
where $\bar W_{t,\phi}$ satisfies
\begin{align}\label{ass}
 \bar{W}_{t,\phi} \leq \kappa_2 \lvert \nabla \bar{W}_{t,\phi}\rvert^2,
\end{align}
where $\kappa_2>0$. First, we consider the case $\mathcal{E}^{\rm p}_\phi = \emptyset$. For all $\tau \in [t_l, t_{l+1})$, the derivative of \eqref{V} along the trajectories of \eqref{152} yields
\begin{align*}
    \dot V %\nabla& \bar W_{t,\phi}^\top \dot{e}_{{x_{t, \phi}}} + \tilde v^\top \left( -k_2 L_{\phi}  \tilde v - k_3 \tilde v - E_{t_\phi} \nabla\bar W_{t,\phi} \right)\\
    = -&k_1 \nabla \bar W_{t,\phi}^\top [L_{e_{t,\phi}}\otimes I_3]\nabla \bar W_{t,\phi} + \nabla \bar W_{t,\phi}^\top [E^\top_{t,\phi} \otimes I_3] \tilde v_{\phi}\\
    -&k_3  \tilde v_\phi^\top \tilde v_\phi -\tilde v_\phi^\top [E_{t,\phi}\otimes I_3]\nabla \bar W_{t,\phi} - k_2 \tilde v_\phi^\top [L_\phi\otimes I_3] \tilde v_\phi \\
    \leq -&k_1 \lambda_{\min}(L_{e_{t,\phi}})\lvert \nabla \bar W_{t,\phi} \rvert^2 - k_3 \lvert \tilde v_\phi \rvert^2
\end{align*}
due to that since $L_\phi \geq 0$ and $L_{e_{t,\phi}}>0$, we have $\lambda_{\min}(L_\phi)=0$ and $0 < \lambda_{\min}(L_{e_{t,\phi}})$.
Consequently, given $\gamma_\phi = \min \{ \frac{k_1\lambda_{\min}(L_{e_{t,\phi}})}{\kappa_2},2k_3 \}$ and from \eqref{ass}, we obtain
\begin{align}\label{cond1}
	\dot V_{\phi}(\bar e_{x_t,\phi}(\tau),\tilde v_\phi(\tau)) \leq -\gamma_{\phi} V_{\phi}(\bar e_{x_t,\phi}(\tau),\tilde v_\phi(\tau)).
\end{align}
Now, let $\phi, \hat \phi \in \mathcal{P}$ be two consecutive modes and  $V_{\phi}(\bar e_{x_t,\phi}(t_l^+),\tilde v_\phi(t_l^+)) := \bar W_{t,\phi}(t_l^+)+ \frac{1}{2}\lvert \tilde v_\phi(t_l^+)\rvert^2$ and
    $V_{\phi}(\bar e_{x_t,\phi}(t_l^-),\tilde v_\phi(t_l^-)) := \bar W_{t,\phi}(t_l^-)+ \frac{1}{2}\lvert \tilde v_\phi(t_l^-)\rvert^2.$
% \begin{align*}
%     V_{\phi}(\bar e_{x_t,\phi}(t_l^+),\tilde v_\phi(t_l^+)) &:= \bar W_{t,\phi}(t_l^+)+ \frac{1}{2}\lvert \tilde v_\phi(t_l^+)\rvert^2\\
%     V_{\phi}(\bar e_{x_t,\phi}(t_l^-),\tilde v_\phi(t_l^-)) &:= \bar W_{t,\phi}(t_l^-)+ \frac{1}{2}\lvert \tilde v_\phi(t_l^-)\rvert^2.
% \end{align*}
% $$V_{\phi}(\bar e_{x,\phi}(t_l^+),\tilde v_\phi(t_l^+)):= \bar W_{t,\phi}(t_l^+)+ \frac{1}{2}\lvert \tilde v_\phi(t_l^+)\rvert^2$$ $V_{\phi}(\bar e_{x,\phi}(t_l^-),\tilde v_\phi(t_l^-)):= \bar W_{t,\phi}(t_l^-)+ \frac{1}{2}\lvert \tilde v_\phi(t_l^-)\rvert^2.$ 
It follows from \eqref{edge_switch} that for any $t_l$, 
%\begin{subequations}\label{inequalities}
\begin{align}
    %\lvert e_{x_t,\phi}(t_l^+) \rvert^2 &\leq 2\vert \Xi^x_{\phi, \hat \phi} \rvert^2 \lvert e_{x_t,\hat\phi}(t_l^-) \rvert^2 + 2\lvert \Phi^x_l\rvert^2 \label{inequalities_a}\\
    \lvert \tilde v_{\phi}(t_l^+) \rvert^2 &\leq 2\vert \Xi^{\tilde v}_{\phi, \hat \phi} \rvert^2 \lvert \tilde v_{\hat \phi}(t_l^-) \rvert^2 + 2\lvert \Phi^{\tilde v}_l\rvert^2. \label{inequalities_b}
\end{align}
%\end{subequations}
% \begin{align*}
%     \lvert e_{x_t,\phi}(t_l^+) \rvert^2 &\leq 2\vert \Xi^x_{\phi, \hat \phi} \rvert^2 \lvert e_{x_t,\hat\phi}(t_l^-) \rvert^2 + 2\lvert \Phi^x_l\rvert^2\\
%     \lvert \tilde v_{\phi}(t_l^+) \rvert^2 &\leq 2\vert \Xi^{\tilde v}_{\phi, \hat \phi} \rvert^2 \lvert \tilde v_{\hat \phi}(t_l^-) \rvert^2 + 2\lvert \Phi^{\tilde v}_l\rvert^2.
% \end{align*}
% $\lvert e_{x,{\phi}}(t_l^+) \rvert^2 \leq 2\vert \Xi^x_{\phi, \hat \phi} \rvert^2 \lvert e_{x,{\hat \phi}}(t_l^-) \rvert^2 + 2\lvert \Phi^x_l\rvert^2$ and $\lvert \tilde v_{\phi}(t_l^+) \rvert^2 \leq 2\vert \Xi^{\tilde v}_{\phi, \hat \phi} \rvert^2 \lvert \tilde v_{\hat \phi}(t_l^-) \rvert^2 + 2\lvert \Phi^{\tilde v}_l\rvert^2$.
We note that $\vert \Xi^x_{\phi, \hat \phi} \rvert \equiv 1$ and $\vert \Xi^{\tilde v}_{\phi, \hat \phi} \rvert \equiv 1$ because they are submatrices of an identity matrix, and the singular values of an identity matrix are all 1, meaning that the largest singular value is also 1. At each switching instant $t=t_l$, let $\varrho := M_\phi - M_{\hat\phi}$ denote the number of newly activated edges.  
%$\tilde W_{\phi,k}$ is continuous on the compact set defined by \eqref{eq:prospective_edge_activation}, and 
Under Assumption \ref{ass2}$, \mathcal{P}$ is finite and, there exists $\bar W_\epsilon>0$ such that
\begin{equation}\label{eq:Wbar_eps}
    \tilde W_{\phi,k}\bigl(e_{x_{k,\phi}}(t_l^+)\bigr) \;\leq\; \bar W_\epsilon, \quad M_{\hat\phi} < k \leq M_\phi ,
\end{equation}
at every switching instant $t_l$ and for every $\phi \in \mathcal{P}$. Since the states of the edges already present are unaffected by the transition, it follows from \eqref{edge_switch} and \eqref{eq:Wbar_eps} that
\begin{align*}
    &V_{\phi}(\bar e_{x_t,\phi}(t_l^+),\tilde v_\phi(t_l^+)) = \bar W_{t, \hat \phi}(t_l^-) + \frac{1}{2} \lvert \tilde v_\phi (t_l^+) \rvert^2\\
    &+ \sum_{k=M_{\hat \phi}+1}^{M_\phi} W_{t, \phi, k}(e_{x_t,k,\phi}(t_l^+)) \\
    &\leq \bar W_{t, \hat \phi}(t_l^-) + \varrho \bar W_\epsilon + \frac{1}{2} \lvert \tilde v_\phi (t_l^+) \rvert^2%\\
     %&\leq \bar W_{t, \hat \phi}(t_l^-) + \varrho \bar W_\epsilon + \frac{1}{2}\left( 2\vert \Xi^{\tilde v}_{\phi, \hat \phi} \rvert^2 \lvert \tilde v_{\hat \phi}(t_l^-) \rvert^2 + 2\lvert \Phi^{\tilde v}_l\rvert^2 \right)
     % &\leq \left( 1 + \frac{4 \kappa_2'}{\kappa_1} \vert \Xi^x_{\phi, \hat \phi} \rvert^2\right) \bar W_{t, \hat \phi}(t_l^-) + \vert \Xi^{\tilde v}_{\phi, \hat \phi} \rvert^2 \lvert \tilde v_\phi(t_l^-)\rvert^2\\
     % &+ 2\kappa'_2 \lvert \Phi^x_l\rvert^2 + \lvert \Phi^{\tilde v}_l \rvert^2,
\end{align*}
and therefore, using $\bar W_{t,\hat\phi}(t_l^-) \leq 2\bar W_{t,\hat\phi}(t_l^-)$ and \eqref{inequalities_b}, we obtain
	\begin{align}\label{cond2}
		V_{\phi}(e_{x_t,\phi}(t_l^+), \tilde v_\phi (t_l^+)) \leq \Omega_{\phi , \hat \phi} V_{\hat \phi}(e_{x_t,\phi}(t_l^-), \tilde v_\phi (t_l^-)) + \Theta,
	\end{align}
where $\Omega_{\phi , \hat \phi} = 2 $ and $\Theta = \bar \varrho \bar W_\epsilon + \bar \Phi_{\tilde v}$ with $\bar\varrho := \max_{l} \varrho$ is the largest number of edges activated at a single switching instant. Then, from \eqref{eq:smallness}, \eqref{cond1}, \eqref{cond2} and invoking \cite[Theorem 1]{xue2022stability}, it follows that the origin of system \eqref{152}-\eqref{edge_switch} is uniformly practically stable for all initial conditions in $\mathcal S_\phi \times \mathbb{R}^{3N_\phi}$ if the switching $\sigma$ admits an average dwell-time that satisfies \eqref{cond}.

When $\mathcal{E}^{\rm p}_\phi \neq \emptyset$, the derivative of \eqref{V} carries the 
additional term $\sum_{k\in\mathcal{E}^{\rm p}_\phi}
\frac{\partial W^{\rho}_{\phi,k}}{\partial \rho_k}\,\dot\rho_k.$ Enlarging $\rho_k$ relaxes the upper bound in \eqref{eq:relaxed_edge_set} and hence decreases the barrier, so $\partial W^{\rho}_{\phi,k}/\partial\rho_k \leq 0$. From \eqref{eq:rho_dynamics}, if \eqref{eq:rho_nominal_dynamics} is active then $\dot \rho_k < 0$ and if \eqref{98} is active then $\dot \rho_k > 0$. In the latter case it remains bounded from \eqref{eq:relaxed_margin_condition}. Consequently there exists $\varsigma\geq0$ such that $\sum_{k\in\mathcal{E}^{\rm p}_\phi}\frac{\partial W^{\rho}_{\phi,k}}{\partial \rho_k}\,\dot\rho_k \;\leq\; \varsigma,$ and \eqref{cond1} is replaced by $\dot V_{\phi} \leq -\gamma_{\phi} V_{\phi} + \varsigma.$ Therefore, the argument of the previous step holds with $\Theta$ replaced by 
$\Theta_\varsigma := \Theta + \Omega_{\phi,\hat\phi}\varsigma/\gamma$, where 
$\gamma := \min_{\phi\in\mathcal{P}}\gamma_\phi$. Moreover $\varsigma = 0$ once every prospective edge has been activated.

Next, we show that the set $\mathcal{I}_{\phi}$ is forward invariant and corresponds to the domain of attraction for the closed loop system \eqref{152}-\eqref{edge_switch}. We proceed by contradiction. From \eqref{cond1}, and the boundedness of $\rho_k$ implied by \eqref{eq:rho_dynamics}, when $\mathcal{E}^{\rm p}_\phi \neq \emptyset$, for any mode $\phi = \sigma(t)$, $V_{\sigma(t)}$ remains bounded on each interval $[t_l,t_{l+1})$, and from \eqref{cond2} together with \eqref{eq:smallness}, it satisfies $V_{\sigma(t)}(t_l^{+})\leq c$ at every switching instant. Hence $V_{\sigma(t)}(t)\leq c$ for all $t\geq t_0$. Assume now that there exists $T>t_0$ such that $e_{x_t,\sigma(t)}(t)\in\mathcal{I}_{\sigma(t)}$ for all $t\in[t_0,T)$ but $e_{x_t,\sigma(T)}(T)\notin\mathcal{I}_{\sigma(T)}$. Then, $\lvert e_{x_{k},\sigma(t)}+\delta_{d_{k},\sigma(t)}\rvert$ tends to either $d_{\min}$ or $d_{\max}$ as $t\to T$, for at least one $k\leq M_{\sigma(t)}$. By the definition of the BLF \eqref{eq:blf_edge}, this implies $V_{\sigma(t)}(t)\to\infty$ as $t\to T$, contradicting $V_{\sigma(t)}(t)\leq c$ for all $t\geq t_0$. Forward invariance of $\mathcal{I}_{\sigma(t)}$ follows. When $\mathcal{E}^{\rm p}_\phi \neq \emptyset$, the same argument applies with $\mathcal{I}_{\sigma(t)}$ replaced by $\mathcal{I}^{\rho}_{\sigma(t)}$ in \eqref{eq:relaxed_edge_set}. %, so that inter-robot collisions are avoided and all initial and newly activated edges are maintained. 
By construction $c<\bar c$, so $\mathcal{S}_{\phi}$ excludes the saddle point $e^{*}_{t,\phi}$ and the origin is the only critical point of $\bar W_{t,\phi}$ in $\mathcal{S}_{\phi}$. Moreover $V_\phi$ is positive definite on $\mathcal{S}_{\phi}\times\mathbb{R}^{3N_\phi}$ and, by the forward invariance established above, the trajectories remain in $\mathcal{S}_{\sigma(t)}\times\mathbb{R}^{3N_\phi}$ for all $t\geq t_0$, where \eqref{cond1} and \eqref{cond2} hold. %Invoking \cite[Theorem~1]{xue2022stability}, the origin is uniformly practically stable for all initial conditions in $\mathcal{S}_{\phi_0}\times\mathbb{R}^{3N_{\phi_0}}$.

\end{proof}
\section{Simulation Results}
\label{sec:simulation}

The proposed framework is validated in a ROS~2--Gazebo simulation with seven quadrotors. Gazebo simulates the full six-degree-of-freedom rigid-body dynamics of each vehicle together with the individual rotor dynamics. The distributed controller developed in the previous sections provides the desired translational acceleration, which is realized by a low-level geometric controller \cite{lee2010geometric} that computes the commanded total thrust and attitude, followed by rotor allocation. A charging station is included in the environment, where robots that are not currently part of the team remain parked and from which newly joining robots are deployed (Fig.~\ref{fig:gazebo_setup}).
\begin{figure}
    \centering
    \includegraphics[width=0.6\linewidth]{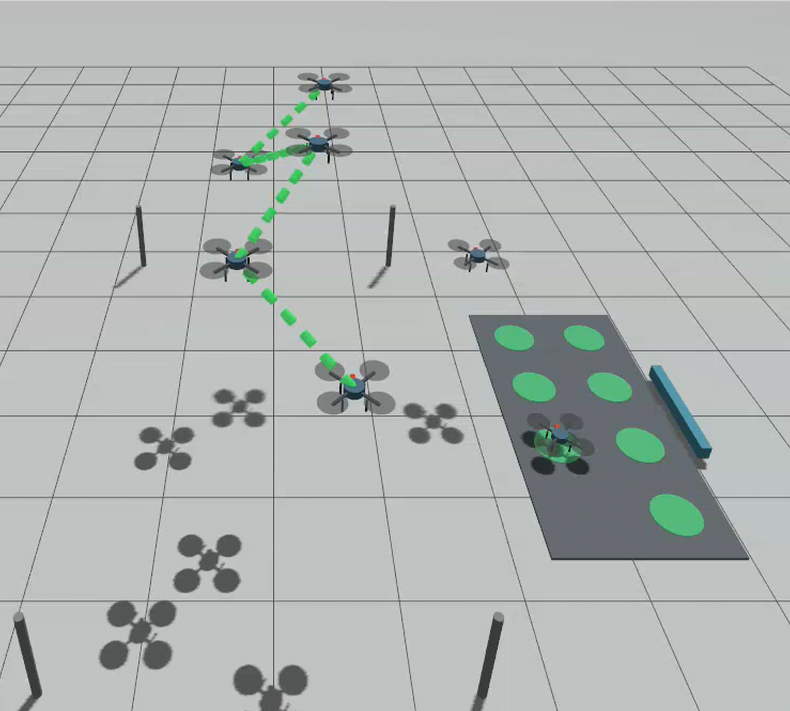}
    \caption{Simulation setup, showing the charging station (on the right) with a drone parked and another in the process of joining the active team.}
    \label{fig:gazebo_setup}
\end{figure}

For the numerical validation, the edge potentials are instantiated using
the weighted recentered barrier function of~\cite{feller2015weight}.
For each $k\leq M_\phi$, let
\(
    \delta_{k,\phi}
    =
    e_{x_k,\phi}+\delta_{d_k,\phi}
\)
denote the corresponding relative displacement. Let $\kappa_{1,k}
=
\frac{1}{2}
\frac{d_{\min}^2}
{
    \lvert\delta_{d_k,\phi}\rvert^2
    \left(
        \lvert\delta_{d_k,\phi}\rvert^2-d_{\min}^2
    \right)
}$ and $\kappa_{2,k}=
\frac{1}{2}
\frac{1}
{
    d_{\max}^2-\lvert\delta_{d_k,\phi}\rvert^2
}.$ We use
\begin{align}\label{eq:simulation_blf}
&\widetilde W_{k,\phi}(e_{x_k,\phi})
={}
\frac{1}{2}\lvert e_{x_k,\phi}\rvert^2
\nonumber\\
&+
\kappa_{1,k}
\left[
    \ln\left(
        \frac{d_{\max}^2}
        {d_{\max}^2-\lvert\delta_{k,\phi}\rvert^2}
    \right)
    -
    \ln\left(
        \frac{d_{\max}^2}
        {d_{\max}^2-\lvert\delta_{d_k,\phi}\rvert^2}
    \right)
\right]
\nonumber\\
&+
\kappa_{2,k}
\left[
    \ln\left(
        \frac{\lvert\delta_{k,\phi}\rvert^2}
        {\lvert\delta_{k,\phi}\rvert^2-d_{\min}^2}
    \right)
    -
    \ln\left(
        \frac{\lvert\delta_{d_k,\phi}\rvert^2}
        {\lvert\delta_{d_k,\phi}\rvert^2-d_{\min}^2}
    \right)
\right].
\end{align}
% where
% \begin{equation}
% \begin{aligned}
% \label{eq:simulation_blf_weights}
% \kappa_{1,k}
% &=
% \frac{1}{2}
% \frac{d_{\min}^2}
% {
%     \lvert\delta_{d_k,\phi}\rvert^2
%     \left(
%         \lvert\delta_{d_k,\phi}\rvert^2-d_{\min}^2
%     \right)
% },
% \\
% \kappa_{2,k}
% &=
% \frac{1}{2}
% \frac{1}
% {
%     d_{\max}^2-\lvert\delta_{d_k,\phi}\rvert^2
% }.
% \end{aligned}
% \end{equation}
% The weights are selected such that the barrier contribution has zero gradient at the desired relative displacement, while the potential diverges as either distance boundary is approached. 
For prospective edges, the same construction is employed with the relaxed upper bound $d_{\max}+\rho_k$ in place of $d_{\max}$.

The simulation parameters are
$d_{\min}=0.45~\mathrm{m}$,
$d_{\max}=1.65~\mathrm{m}$,
$(k_1,k_2,k_3)=(0.30,0.30,1.25)$,
$c_\rho=0.10~\mathrm{m/s}$,
$\lambda_\rho=2.0~\mathrm{s}^{-1}$, and
$\varepsilon_\rho=0.22~\mathrm{m}$.
The experiment considers seven robots, five of which initially form the
chain
\(
(1,2),(2,3),(3,4),(4,5)
\),
while robots~6 and~7 are initially at the charging station and request
admission at $t=8~\mathrm{s}$ and $t=25~\mathrm{s}$, respectively.
They sequentially take off, approach the team, establish their prospective
interactions, and are admitted to the OMRS according to
Section~\ref{par:node_addition}. Subsequently, robots~3, 2, and~4 request
departure at $t=51~\mathrm{s}$, $73~\mathrm{s}$, and $95~\mathrm{s}$.
Whenever a departure would disconnect the interaction graph, the formation
manager first establishes the required prospective bridge and removes the
robot only after the bridge has become an established interaction, as
described in Section~\ref{par:node_removal}. Departing robots then
independently return to the charging station, resulting in repeated changes
in both the number of active robots and the interaction topology.

Figure~\ref{fig:plots} summarizes the evolution of the system over the complete mission. 
The center panels report the formation-control performance. Starting from a large initial error, the desired formation is rapidly recovered. Reconfiguration events introduce temporary transients, as expected from the changes in the desired formation and interaction graph, after which the errors converge again. 
The composite Lyapunov function exhibits the same behavior, decreasing between consecutive reconfiguration events.
\begin{figure*}[t]
    \centering
    \includegraphics[width=\textwidth]{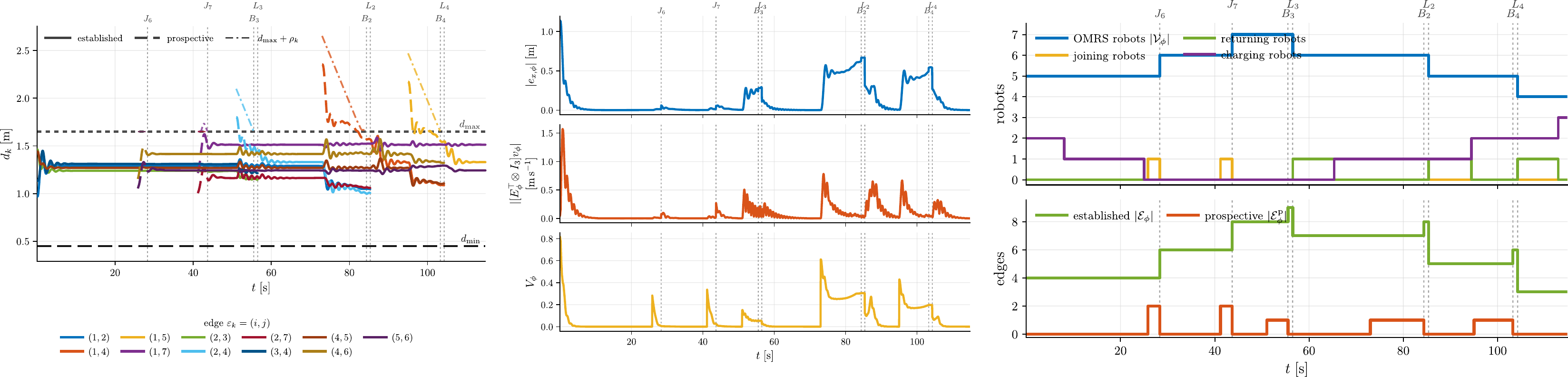}
    \caption{Evolution of the OMRS during the Gazebo simulation.
    Left: inter-robot distances associated with established and prospective edges, together with the relaxed upper bounds $d_{\max}+\rho_k$ and the nominal distance limits $d_{\min}$ and $d_{\max}$.
    Center: formation-position error $\lvert e_{x,\phi}\rvert$, velocity disagreement $\lvert [E_\phi^\top\otimes I_3]v_\phi\rvert$, and composite Lyapunov function $V_\phi$.
    Right: number of robots in the OMRS, joining, returning, and at the charging station (top), and number of established and prospective edges (bottom).
    Vertical dashed lines indicate robot admissions $J_i$, bridge activations $B_i$, and robot departures $L_i$.}
    \label{fig:plots}
\end{figure*}
The left panel of Fig.~\ref{fig:plots} shows the evolution of the inter-robot constraints. Throughout the simulation, the minimum distance between any two physical robots remains above the collision-avoidance bound, while every established interaction remains within the nominal connectivity range.
Prospective interactions are allowed to start outside the nominal range; in the considered mission they are initialized at distances of up to approximately $2.35~\mathrm{m}$. The corresponding relaxed bounds $d_{\max}+\rho_k$ preserve feasibility while the prospective robots approach one another, with $\rho_k$ converging to zero before the associated interactions are promoted to established edges. In particular, three bridge interactions are successfully created before the three planned robot departures. Finally, the right panels of Fig.~\ref{fig:plots} highlight the open nature of the system, showing the changes in the number of robots and interaction edges throughout the mission. Two robots are sequentially admitted to the OMRS, while three robots subsequently leave and return to the charging station, with connectivity preserved through the make-before-break bridge construction.

Overall, the results show that the proposed controller and formation manager preserve collision avoidance and connectivity while repeatedly adding and removing robots from a multi-UAV system. The complete evolution of the experiment, including robot admission, bridge formation, departures, and return-to-charge maneuvers, is shown in the accompanying video.

\section{CONCLUSIONS}
We presented a BLF-based distributed control framework for formation control of OMRS with robots joining and leaving the team over time. The proposed formation manager coordinates these reconfigurations by defining new edges to be formed for connectivity-critical departures and temporarily relaxing their connectivity constraints through auxiliary dynamics that tend to recover the nominal interaction range in finite time, while the controller enforces collision avoidance and connectivity constraints. The resulting switched closed-loop system was shown to be uniformly practically stable under a transition-dependent average dwell-time condition, and the approach was validated in realistic multi-UAV Gazebo simulations.

%\addtolength{\textheight}{-12cm}   % This command serves to balance the column lengths
                                  % on the last page of the document manually. It shortens
                                  % the textheight of the last page by a suitable amount.
                                  % This command does not take effect until the next page
                                  % so it should come on the page before the last. Make
                                  % sure that you do not shorten the textheight too much.

\bibliographystyle{IEEEtran}
\bibliography{references_pelin}

\end{document}